\DeclareMathOperator{\cw}{{\scriptstyle\mathcal{W}}}
\DeclareMathOperator{\bcw}{{\boldsymbol{\scriptstyle\mathcal{W}}}}
\DeclareMathOperator{\E}{\mathds{E}}
\DeclareMathOperator{\w}{\boldsymbol{w}}
\DeclareMathOperator{\x}{\boldsymbol{x}}
\DeclareMathOperator{\s}{\boldsymbol{s}}
\theoremstyle{plain}
\newtheorem{assumption}{Assumption}
\newtheorem{theorem}{Theorem}
\newtheorem{lemma}{Lemma}
\title{Competing Adaptive Networks}
\name{Stefan Vlaski and Ali H. Sayed}
\address{School of Engineering, \'{E}cole Polytechnique F\'{e}d\'{e}rale de Lausanne
\thanks{Emails:\{stefan.vlaski, ali.sayed\} @epfl.ch.}}
\begin{document}
\ninept
\maketitle
\pagestyle{empty}
\begin{abstract}
  Adaptive networks have the capability to pursue solutions of \emph{global} stochastic optimization problems by relying only on local interactions within neighborhoods. The diffusion of information through repeated interactions allows for globally optimal behavior, without the need for central coordination. Most existing strategies are developed for \emph{cooperative} learning settings, where the objective of the network is common to all agents. We consider in this work a team setting, where a subset of the agents form a team with a common goal while competing with the remainder of the network. We develop an algorithm for decentralized competition among teams of adaptive agents, analyze its dynamics and present an application in the decentralized training of generative adversarial neural networks.
\end{abstract}
\begin{keywords}
Decentralized optimization, competition, teams, game theory, diffusion strategy.
\end{keywords}
\section{Introduction}\label{sec:introduction}
\subsection{Problem Formulation}
We consider a collection of \( K \) agents, decomposed into two ``teams'' \( \mathcal{N}^{(1)} \) and \( \mathcal{N}^{(2)} \) of size \( K^{(1)} \) and \( K^{(2)} \), respectively. The objective of team \( (1) \) is to coordinate on a common task \( w^{(1)} \), while the objective of team \( (2) \) is to coordinate on another task \( w^{(2)} \), each while optimizing the (potentially) conflicting objectives:
\begin{align}\label{eq:competing_problem_1}
  {w^{(1)}}^o \triangleq \min_{w^{(1)}} J^{(1)}(w^{(1)}; w^{(2)}) \\
  {w^{(2)}}^o \triangleq \min_{w^{(2)}} J^{(2)}(w^{(1)}; w^{(2)})\label{eq:competing_problem_2}
\end{align}
Observe that the objectives are functions of both parameters. It is reasonable to set:
\begin{align}
  J^{(1)}(w^{(1)}; w^{(2)}) = - J^{(2)}(w^{(1)}; w^{(2)}) = J(w^{(1)}; w^{(2)})
\end{align}
in which case we recover the zero-sum game:
\begin{align}\label{eq:zero_sum_problem}
  {w^{(1)}}^o \triangleq \min_{w^{(1)}} J(w^{(1)}; w^{(2)}) \\
  {w^{(2)}}^o \triangleq \max_{w^{(2)}} J(w^{(1)}; w^{(2)})
\end{align}
We will allow for broader choices of \( J^{(1)}(\cdot; \cdot) \) and \( J^{(2)}(\cdot; \cdot) \) for generality. At the team-level, problem~\eqref{eq:competing_problem_1}--\eqref{eq:competing_problem_2} describes a classical two-player game. In the absence of communication constraints within each team, solutions could be pursued through a variety of iterative schemes, such as gradient descent~\cite{Flam02}:
\begin{align}
  w_{i}^{(1)} = w_{i-1}^{(1)} - \mu \nabla_{w^{(1)}} J^{(1)}\left(w_{i-1}^{(1)}; w_{i-1}^{(2)}\right)\label{eq:team_gradient_1} \\
  w_{i}^{(2)} = w_{i-1}^{(2)} - \mu \nabla_{w^{(2)}} J^{(2)}\left(w_{i-1}^{(1)}; w_{i-1}^{(2)}\right)\label{eq:team_gradient_2}
\end{align}
For brevity, we will drop the subscripts \( {w^{(1)}} \) and \( {w^{(2)}} \) in \( \nabla \), with the understanding that gradients of \( J^{(1)} (\cdot; \cdot) \) are taken relative to \( w^{(1)} \) and similarly for \( J^{(2)} (\cdot; \cdot) \). The key difference between the proposed setting, and the types of games considered most often in the literature, is that the team objectives are such that no single agent is able to evaluate \( \nabla J^{(1)}(w_{i-1}^{(1)}; w_{i-1}^{(2)}) \) or \( \nabla J^{(2)}(w_{i-1}^{(1)}; w_{i-1}^{(2)}) \) on its own, and hence collaboration within teams, \textcolor{black}{while competing across teams,} is necessary. In general, each team's objective takes the ``sum-of-costs'' form:
\begin{align}
  J^{(1)}\left(w^{(1)}; w^{(2)}\right) \triangleq \frac{1}{K_1} \sum_{k \in \mathcal{N}^{(1)}} J_k^{(1)}\left(w^{(1)}; w^{(2)}\right) \label{eq:team_one_consensus}\\
  J^{(2)}\left(w^{(1)}; w^{(2)}\right) \triangleq \frac{1}{K_2} \sum_{k \in \mathcal{N}^{(2)}} J_k^{(2)}\left(w^{(1)}; w^{(2)}\right) \label{eq:team_two_consensus}
\end{align}
where each local cost is the average of a loss function:
\begin{align}\label{eq:local_loss}
  J_k^{(t)}\left(w^{(1)}; w^{(2)}\right) \triangleq \mathds{E} Q\left( w^{(1)}; w^{(2)}; \boldsymbol{x}_k^{(t)} \right)
\end{align}
where we are introducting the team variable \( t = \{1, 2 \} \) for brevity. In the above, the variable \(\x_k^{(t)}\) denotes the data that is received at agent \(k\). Returning to~\eqref{eq:team_gradient_1}--\eqref{eq:team_gradient_2}, we find for team \( (t) \):
\begin{align}\label{eq:centralized_recursions}
  w_{i}^{(t)} = w_{i-1}^{(t)} - \mu \sum_{k \in \mathcal{N}^{(t)}} \nabla J_k^{(t)}\left(w_{i-1}^{(1)}; w_{i-1}^{(2)}\right)
\end{align}
We note two drawbacks for this implementation. First, evaluating the full gradient \( \nabla J^{(t)}(w_{i-1}^{(1)}; w_{i-1}^{(2)}) \) requires the central aggregation of all local gradients \( \nabla J^{(t)}(w_{i-1}^{(1)}; w_{i-1}^{(2)}) \) across the team. Second, even in the absence of communication constraints, in light of~\eqref{eq:local_loss}, evaluation of local gradients \( \nabla J^{(t)}(w_{i-1}^{(1)}; w_{i-1}^{(2)}) \) requires knowledge of the data distribution of \( \boldsymbol{x}_k^{(t)} \), which is generally unavailable in practice.

\subsection{Related Works}
Learning problems over graphs are most commonly studied in a \emph{cooperative} setting, where collections of agents coordinate to optimize some global loss function through localized interactions over neighborhoods.
Solutions can be pursued by a variety of decentralized algorithms, including primal~\cite{Nedic09, Chen15transient, Sayed14} and primal-dual~\cite{Shi15, Paolo16, Yuan18, Xin19, Jakovetic20} methods. All of these algorithms yield consensus solutions, where all agents (approximately) converge to a common optimizer of some aggregate loss.

Multi-objective settings, where local objectives differ, and convergence to consensus may not be desired, can be broadly classified into decentralized multi-task learning problems~\cite{Smith17, Nassif20mag} where local objectives do not interfere with each other, and competitive learning problems where choices made by one agent affect the loss of another such as generalized Nash equilibrium problems~\cite{Facchinei07}. This work falls into the latter category. In the important case where the local objective function, in addition to the local action taken by any given agent, depend only on the actions of its neighbors, gradient descent-based approaches~\cite{Flam02} result in naturally decentralized recursions~\cite{ChungKai17}. In the partial-information setting, where local costs depend not only on actions taken within neighborhoods, but also on unobserved actions, schemes based on consensus mechanisms for estimating relevant actions over the graph have been proposed in~\cite{Koshal16, Salehisadaghiani16, Salehisadaghiani19, Tatarenko19}. All these strategies rely on the assumption that, given (an estimate of) the actions of competing agents, each agent is able to evaluate its local objective independently.

In contrast, we consider a setting where subsets of agents form teams with a common objective, in the form of an aggregate loss~\eqref{eq:team_one_consensus}--\eqref{eq:team_two_consensus}, while competing against the remainder of the network. Since the aggregate loss depends on the private data \( \boldsymbol{x}_k^{(t)} \) at each agent \( k \), no single agent is able to evaluate the team objective on its own. More closely related to this setting is the work~\cite{Meng20}, where each cluster designates a representative agent, and interaction among clusters is performed through representative agents via a deterministic gradient-tracking algorithm. In contrast, we present a fully decentralized, and stochastic, algorithm based on the diffusion strategy for decentralized stochastic optimization. While finalizing this manuscript for submission, the work~\cite{Zimmermann21} appeared on arXiv. The authors present a fully decentralized optimization algorithm based on deterministic gradient-tracking, and establish convergence to Nash equilibria under strong-convexity conditions. In contrast, we rely on stochastic gradients, and study the dynamics for general, non-convex loss functions.

\textcolor{black}{\section{Algorithm Development}}
\textcolor{black}{\subsection{Network Model}}
\noindent \textcolor{black}{We denote the set of \( K_1 \) agents belonging to team \( 1\) by \( \mathcal{N}^{(1)} \), and the set of \( K_2 \) agents belonging to team \( (2) \) by \( \mathcal{N}^{(2)} \). The agents in both teams belong to the larger set \( \mathcal{N} \triangleq \mathcal{N}^{(1)} \cup \mathcal{N}^{(2)} \). With each team \( \mathcal{N}^{(1)} \) and \( \mathcal{N}^{(2)} \), we associate graphs with doubly-stochastic adjacency matrices \( A^{(1)} \in \mathds{R}^{K_1 \times K_1} \) and \( A^{(2)} \in \mathds{R}^{K_2 \times K_2} \), respectively. These graphs will be used by the respective teams to coordinate on their local objectives~\eqref{eq:team_one_consensus}--\eqref{eq:team_two_consensus}, and correspond to the blue and red edges in Fig.~\ref{fig:teams}, respectively. However, the coupled nature of~\eqref{eq:team_one_consensus}--\eqref{eq:team_two_consensus}, as we will see, makes it necessary for team \( 1 \) to perform inference about the action \( w^{(2)} \) of team \( 2 \), and vice versa. To this end, we will allow for some minimal flow of information from team \( 2 \) to team \( 1 \), and from team \( 1 \) to team \( 2 \). These links are denoted in grey in Fig.~\ref{fig:teams}, and allow some agents in each team to make inference about the other team's chosen action. We capture these interactions in a second set of adjacency matrices \( A^{(12)} \in \mathds{R}^{K_1 \times K_2} \) and \( A^{(21)} \in \mathds{R}^{K_2 \times K_1} \), where \( A^{(21)} \) captures links from any agent in either team \( \mathcal{N} = \mathcal{N}^{(1)} \cup \mathcal{N}^{(2)} \) to agents belonging to \( \mathcal{N}^{(1)} \), and \( A^{(12)} \) captures linkes from \( \mathcal{N} \) to \( \mathcal{N}^{(2)} \). In other words, \( A^{(21)} \) contains blue and grey links in Fig.~\ref{fig:teams}, while \( A^{(12)} \) contains red and grey links. Successful operation of the algorithm will rely on the diffusion of information through the network. To this end, we introduce the following conditions on the combination matrices.\\
\begin{assumption}[\textbf{Connectivity}]\label{as:connectivity}
  For each \( t \in \{ 1, 2 \} \), the combination matrix \( A^{(t)} \in \mathds{R}^{K_t \times K_t}\) is primitive and doubly-stochastic, ensuring in light of the Perron-Frobenius theorem that:
  \begin{align}
    \lambda_2^{(t)} \triangleq \rho\left( A^{(t)} - \frac{1}{K_t} \mathds{1} \mathds{1}^{\mathsf{T}} \right) < 1
  \end{align}
  The matrix \( A^{(t't)} \in \mathds{R}^{K \times K_t} \) is left-stochastic, i.e., \( \mathds{1}^{\mathsf{T}} A^{(t't)} = \mathds{1}^{\mathsf{T}} \), and furthermore, \( a_{\ell k}^{(t' t)} > 0 \) for at least one \( \ell \in \mathcal{N}^{(t')} \) and \( k \in \mathcal{N}^{(t)} \), ensuring that information flows from the network \( (t') \) to \( (t) \). \hfill \qed
\end{assumption}}

\subsection{Competing Diffusion}
We will be deriving the algorithm from the perspective of team \( (t) \), and denote by \( (t') \) the competing team. Suppose for now that \( w^{(t')} \) is fixed and known to all agents in team \( (t) \). Then, the objective of team~\eqref{eq:team_one_consensus}--\eqref{eq:team_two_consensus} forms a traditional consensus optimization problem. Its solution can be pursued by a number of algorithms for decentralized stochastic optimization. In this work, we will be focusing on the diffusion strategy:
\begin{align}
  \boldsymbol{\phi}_{k, i}^{(t)} =&\: \boldsymbol{w}_{k, i-1}^{(t)} - \mu \widehat{\nabla J}_k^{(t)}\left(\boldsymbol{w}_{k, i-1}^{(t)}; w^{(t')}\right)\label{eq:proposed_update}\\
  \boldsymbol{w}_{k, i}^{(t)} =&\: \sum_{\ell \in \mathcal{N}^{(t)}} a_{\ell k}^{(t)} \boldsymbol{\phi}_{\ell, i}^{(t)}\label{eq:proposed_consensus_1}
\end{align}
Here, the \( a_{\ell k}^{(t)} \) denote the weights of a doubly-stochastic combination matrix \( A^{(t)} \) over the agents of team \( (t) \). While the recursion decentralizes the evaluation of~\eqref{eq:team_gradient_1} over team \( (t) \), it nevertheless requires global access to the quantity \( w^{(t')} \). We remedy this by allowing each agent \( k \in \mathcal{N}^{(t)} \) to maintain an estimate of the action \( w^{(t')} \) by the other team, denoted by \( \boldsymbol{w}_{k, i-1}^{(t')} \), and estimate its value using a direct consensus scheme:
\begin{align}
  \boldsymbol{w}_{k, i}^{(t')} = \sum_{\ell \in \mathcal{N}} a_{\ell k}^{(t't)} \boldsymbol{w}_{\ell, i-1}^{(t')} \label{eq:proposed_consensus_2}
\end{align}
We emphasize a subtle distinction between the consensus steps~\eqref{eq:proposed_consensus_1} and~\eqref{eq:proposed_consensus_2}. While~\eqref{eq:proposed_consensus_1} operates only over the agents \( \mathcal{N}^{(t)} \), step~\eqref{eq:proposed_consensus_2} involves agents that belong to team \( (t') \) and have some connection to team \( (t) \) (see Fig.~\ref{fig:teams}). This allows network \( (t) \) to make inference about the global action \( w^{(t')} \) of team \( (t') \).
\begin{figure}
  \centering
  \includegraphics[width=\columnwidth]{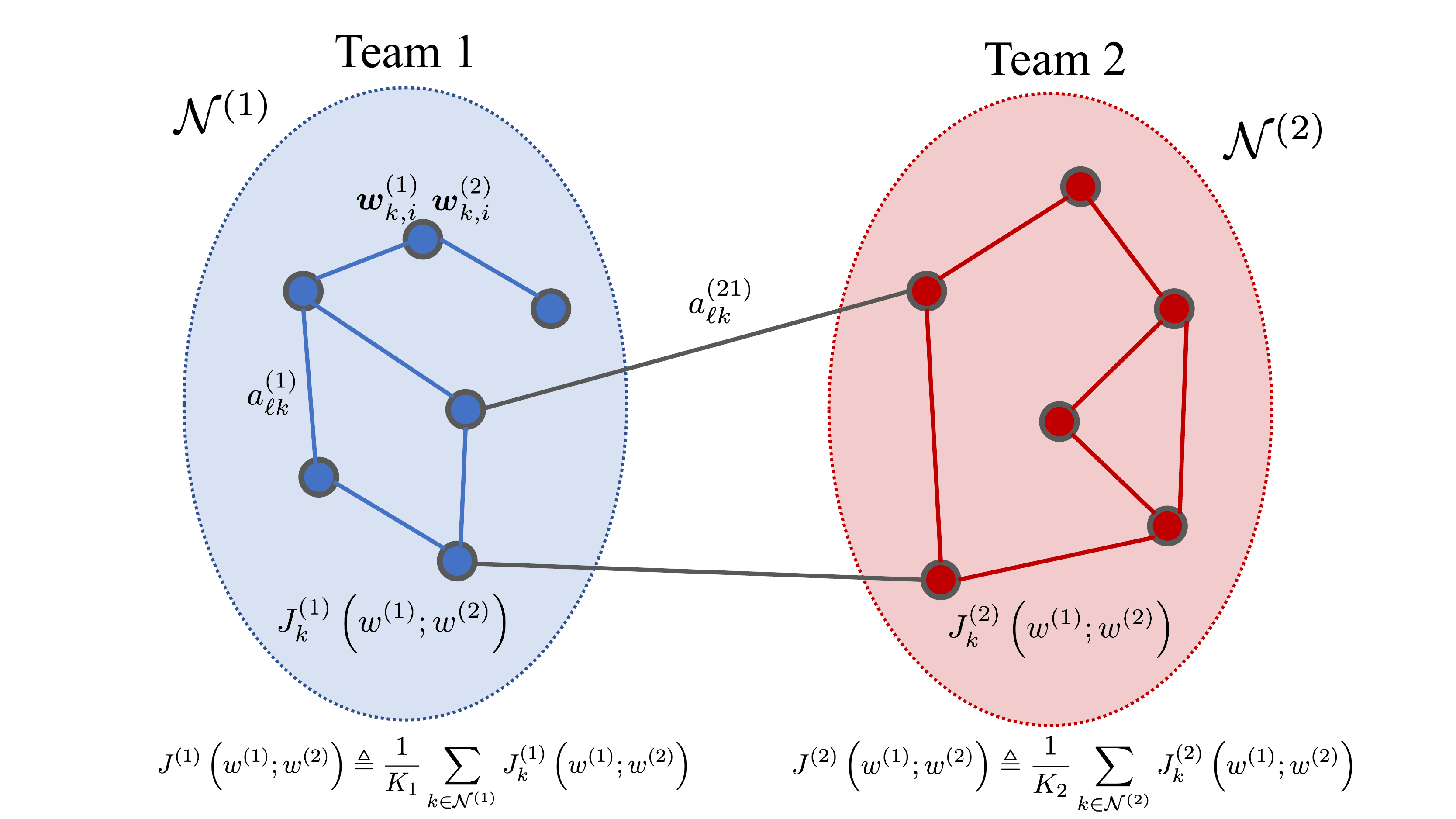}
  \caption{Two competing networks \( \mathcal{N}^{(1)} \) and \( \mathcal{N}^{(2)} \).}\label{fig:teams}
\end{figure}
Replacing \( w^{(t')} \) in~\eqref{eq:proposed_update} by the estimate \( \boldsymbol{w}_{k, i-1}^{(t')} \), we obtain Alg.~\ref{alg:proposed_algorithm}.
\begin{algorithm}
 \For{ \textcolor{black}{\( t \in \{ 1, 2 \} \) and} \(k \in \mathcal{N}^{(t)}\)}{
  \begin{align}
      \hspace{-1cm}\boldsymbol{\phi}_{k, i}^{(t)} =&\: \boldsymbol{w}_{k, i-1}^{(t)} - \mu \widehat{\nabla J}_k^{(t)}\left(\boldsymbol{w}_{k, i-1}^{(t)}; \boldsymbol{w}_{k, i-1}^{(t')}\right)\label{eq:adapt_team_1}\\
    \hspace{-1cm}\boldsymbol{w}_{k, i}^{(t)} =&\: \sum_{\ell \in \mathcal{N}^{(1)}} a_{\ell k}^{(t)} \boldsymbol{\phi}_{\ell, i}^{(1)}\label{eq:combine_1_team_1}\\
    \hspace{-1cm}\boldsymbol{w}_{k, i}^{(t')} =&\: \sum_{\ell \in \mathcal{N}} a_{\ell k}^{(t't)} \boldsymbol{w}_{\ell, i-1}^{(t')}\label{eq:combine_2_team_1}
  \end{align}
 }
 \caption{Diffusion for competing networks.}\label{alg:proposed_algorithm}
\end{algorithm}

\section{Convergence Analysis}
For agents \( k \in \mathcal{N}^{(t)} \) belonging to team \( t \in \{ 1, 2 \} \), we have:
\begin{align}
  \boldsymbol{w}_{k, i}^{(t)} =&\: \sum_{\ell \in \mathcal{N}^{(t)}} a_{\ell k}^{(t)} \left\{ \boldsymbol{w}_{\ell, i-1}^{(t)} - \mu \widehat{\nabla J}_{\ell}^{(t)}\left(\boldsymbol{w}_{\ell, i-1}^{(t)}; \boldsymbol{w}_{\ell, i-1}^{(t')}\right) \right\}\label{eq:summarized_1}
\end{align}
while \( \boldsymbol{w}_{k, i}^{(t')} \) for the other team \( t' \in \{ 2, 1 \} \) is computed according to:
\begin{align}\label{eq:summarized_2}
  \boldsymbol{w}_{k, i}^{(t')} =&\: \sum_{\ell \in \mathcal{N}} a_{\ell k}^{(t't)} \boldsymbol{w}_{\ell, i-1}^{(t')}
\end{align}
For cooperative networks, decentralized recursions of the diffusion type have been shown to rapidly cluster around a particular network centroid, both for convex~\cite{Chen15transient} and nonconvex costs~\cite{Vlaski19nonconvexP1}. In contrast to these works, the gradient appearing in~\eqref{eq:summarized_1} does not solely depend on the local iterates \( \boldsymbol{w}_{\ell, i-1}^{(t)} \), but also on estimates \( \boldsymbol{w}_{\ell, i-1}^{(t')} \), which evolve according to~\eqref{eq:summarized_2} and track \( w^{(t')} \); a quantity which is neither observed by, nor under the control of agents in \( (t) \). We show here that the clustering dynamics continue to hold in a competitive team setting, where agents within a team cluster quickly (at a linear rate determined by the mixing rate of the graph) around a common network centroid, whose evolution tracks that of the centralized recursions~\eqref{eq:centralized_recursions}. To this end, we introduce the centroid vectors:
\begin{align}
  \boldsymbol{w}_{c, i}^{(t)} = \frac{1}{K_t} \sum_{k \in \mathcal{N}^{(t)}} \boldsymbol{w}_{k, i}^{(t)}
\end{align}
Note that team centroids are computed by averaging only the iterates of the given team \( \mathcal{N}^{(t)} \). Then,
\begin{align}
  \boldsymbol{w}_{c, i}^{(t)}=&\: \frac{1}{K_t} \sum_{k \in \mathcal{N}^{(t)}} \sum_{\ell \in \mathcal{N}^{(t)}} a_{\ell k}^{(t)} \left( \boldsymbol{w}_{\ell, i-1}^{(t)} - \mu \widehat{\nabla J}_{\ell}^{(t)}\left(\boldsymbol{w}_{\ell, i-1}^{(t)}; \boldsymbol{w}_{\ell, i-1}^{(t')}\right) \right) \notag \\
  =&\: \frac{1}{K_t} \sum_{\ell \in \mathcal{N}^{(t)}}\sum_{k \in \mathcal{N}^{(t)}} a_{\ell k}^{(t)} \left( \boldsymbol{w}_{\ell, i-1}^{(t)} - \mu \widehat{\nabla J}_{\ell}^{(t)}\left(\boldsymbol{w}_{\ell, i-1}^{(t)}; \boldsymbol{w}_{\ell, i-1}^{(t')}\right) \right) \notag \\
  =&\: \boldsymbol{w}_{c, i-1}^{(t)} - \frac{\mu}{K_t} \sum_{\ell \in \mathcal{N}^{(t)}} \widehat{\nabla J}_{\ell}^{(t)}\left(\boldsymbol{w}_{\ell, i-1}^{(t)}; \boldsymbol{w}_{\ell, i-1}^{(t')}\right)
\end{align}
\begin{assumption}[\textbf{Smoothness}]\label{as:smoothness}
  For each \( k \in \mathcal{N} \) and all \( t \in \{ 1, 2 \} \), the gradient approximation \( \widehat{\nabla J}_k(\cdot, \cdot) \) is Lipschitz in both arguments, namely, for any \( x_1, x_2, y_1, y_2 \in \mathds{R}^{M} \):
  \begin{align}\label{eq:lipschitz_1}
    \|\widehat{\nabla J}_k^{(t)}(x_1; y_1) - \widehat{\nabla J}_k^{(t)}(x_2; y_1)\| \le \delta \|x_1-x_2\| \\
    \|\widehat{\nabla J}_k^{(t)}(x_1; y_1) - \widehat{\nabla J}_k^{(t)}(x_1; y_2)\| \le \delta \|y_1-y_2\|\label{eq:lipschitz_2}
  \end{align}
  Further, the gradients are bounded for all \( x, y \in \mathds{R} \) by:
  \begin{align}
    \|\nabla J_k^{(t)}(x; y)\| \le G
  \end{align}\hfill\qed
\end{assumption}
\begin{assumption}[\textbf{Gradient noise process}]\label{as:gradientnoise}
  For each team \( (t) \) and \( k \in \mathcal{N}^{(t)} \), the gradient noise process is defined as
  \begin{align}
    &\:\s_{k,i}^{(t)}\left(\w_{k,i-1}^{(t)}, \w_{k, i-1}^{(t')}\right)\notag \\
    \triangleq&\: \widehat{\nabla J}_k\left(\w_{k,i-1}^{(t)}, \w_{k, i-1}^{(t')}\right) - \nabla J_k\left(\w_{k,i-1}^{(t)}, \w_{k, i-1}^{(t')}\right)
  \end{align}
  and satisfies
  \begin{subequations}
    \begin{align}
      \E \left\{ \s_{k,i}^{(t)}\left(\w_{k,i-1}^{(t)}, \w_{k, i-1}^{(t')}\right) | \boldsymbol{\mathcal{F}}_{i-1} \right\} &= 0 \label{eq:conditional_zero_mean}\\
      \E \left\{ \|\s_{k,i}^{(t)}\left(\w_{k,i-1}^{(t)}, \w_{k, i-1}^{(t')}\right)\|^2 | \boldsymbol{\mathcal{F}}_{i-1} \right\} &\le {\sigma^2} \label{eq:gradientnoise_fourth}
    \end{align}
  \end{subequations}
  for some non-negative constant \( {\sigma^2} \). \hfill\qed%
\end{assumption}

\begin{lemma}[\textbf{Within-team consensus}]\label{LEM:WITHIN-TEAM_CONSENSUS}
  All iterates \( \w_{k, i}^{(t)} \) for agents \( k \in \mathcal{N}^{(t)} \) in team \( (t) \) cluster around the team centroid \( \w_{c, i}^{(t)} \) after sufficient iterations \( i^o \), i.e.,
  \begin{align}
    \mathds{E} {\|\w_{k, i}^{(t)} - \w_{c, i}^{t}\|}^2 \le \mu^2 \frac{2{\lambda_2^{(t)}}^2}{1 - {\lambda_2^{(t)}}} K_t \left( \frac{G^2}{1 - {\lambda_2^{(t)}}} + \sigma^2 \right)
  \end{align}
  for
  \begin{align}
    i \ge i^o = O(\log(\mu)) = o(\mu^{-1})
  \end{align}
\end{lemma}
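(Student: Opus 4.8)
The plan is to carry out the standard diffusion energy argument, adapted to the fact that the driving gradient in \eqref{eq:summarized_1} also depends on the competing team's estimate $\w_{\ell,i-1}^{(t')}$. First I would collect the within-team iterates into a network vector $\mathcal{W}_i^{(t)} = \mathrm{col}\{\w_{k,i}^{(t)}\}_{k\in\mathcal{N}^{(t)}}$ and write \eqref{eq:summarized_1} compactly as $\mathcal{W}_i^{(t)} = (A^{(t)\mathsf{T}}\otimes I_M)(\mathcal{W}_{i-1}^{(t)} - \mu \widehat{\nabla\mathcal{J}}_{i-1}^{(t)})$, where $\widehat{\nabla\mathcal{J}}_{i-1}^{(t)}$ stacks the local stochastic gradients evaluated at $(\w_{\ell,i-1}^{(t)};\w_{\ell,i-1}^{(t')})$. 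Introducing the centering projection $P = I_{K_t} - \tfrac{1}{K_t}\mathds{1}\mathds{1}^{\mathsf{T}}$, the deviation vector $\check{\mathcal{W}}_i^{(t)} = (P\otimes I_M)\mathcal{W}_i^{(t)}$ collects the disagreements $\w_{k,i}^{(t)} - \w_{c,i}^{(t)}$. Using double stochasticity ($A^{(t)\mathsf{T}}\mathds{1} = \mathds{1}$ and $\mathds{1}^{\mathsf{T}}A^{(t)\mathsf{T}} = \mathds{1}^{\mathsf{T}}$) one verifies $PA^{(t)\mathsf{T}} = A^{(t)\mathsf{T}} - \tfrac{1}{K_t}\mathds{1}\mathds{1}^{\mathsf{T}} = (A^{(t)\mathsf{T}} - \tfrac{1}{K_t}\mathds{1}\mathds{1}^{\mathsf{T}})P$, so the consensus component is annihilated and the deviation obeys the closed recursion $\check{\mathcal{W}}_i^{(t)} = \mathcal{B}^{(t)}\check{\mathcal{W}}_{i-1}^{(t)} - \mu\mathcal{B}^{(t)}\widehat{\nabla\mathcal{J}}_{i-1}^{(t)}$, with $\mathcal{B}^{(t)} = (A^{(t)\mathsf{T}} - \tfrac{1}{K_t}\mathds{1}\mathds{1}^{\mathsf{T}})\otimes I_M$ and $\rho(\mathcal{B}^{(t)}) = \lambda_2^{(t)}$ by Assumption~\ref{as:connectivity}.

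Second, I would split the stochastic gradient into its mean and noise, $\widehat{\nabla\mathcal{J}}_{i-1}^{(t)} = \nabla\mathcal{J}_{i-1}^{(t)} + \mathcal{S}_{i-1}^{(t)}$. Conditioning on $\boldsymbol{\mathcal{F}}_{i-1}$ and using the conditional zero-mean property \eqref{eq:conditional_zero_mean}, the cross term vanishes because $\check{\mathcal{W}}_{i-1}^{(t)} - \mu\nabla\mathcal{J}_{i-1}^{(t)}$ is $\boldsymbol{\mathcal{F}}_{i-1}$-measurable, leaving $\E[\|\check{\mathcal{W}}_i^{(t)}\|^2 \mid \boldsymbol{\mathcal{F}}_{i-1}] = \|\mathcal{B}^{(t)}(\check{\mathcal{W}}_{i-1}^{(t)} - \mu\nabla\mathcal{J}_{i-1}^{(t)})\|^2 + \mu^2\E[\|\mathcal{B}^{(t)}\mathcal{S}_{i-1}^{(t)}\|^2 \mid \boldsymbol{\mathcal{F}}_{i-1}]$. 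Bounding the norm of $\mathcal{B}^{(t)}$ by $\lambda_2^{(t)}$, applying the weighted inequality $\|a+b\|^2 \le \tfrac{1}{\lambda_2^{(t)}}\|a\|^2 + \tfrac{1}{1-\lambda_2^{(t)}}\|b\|^2$ to separate $\check{\mathcal{W}}_{i-1}^{(t)}$ from $\mu\nabla\mathcal{J}_{i-1}^{(t)}$, and invoking the gradient bound $\|\nabla\mathcal{J}_{i-1}^{(t)}\|^2 \le K_t G^2$ (Assumption~\ref{as:smoothness}) together with $\E\|\mathcal{S}_{i-1}^{(t)}\|^2 \le K_t\sigma^2$ (Assumption~\ref{as:gradientnoise}) collapses everything to the scalar linear recursion $\E\|\check{\mathcal{W}}_i^{(t)}\|^2 \le \lambda_2^{(t)}\E\|\check{\mathcal{W}}_{i-1}^{(t)}\|^2 + c$, with $c = \mu^2 {\lambda_2^{(t)}}^2 K_t\big(\tfrac{G^2}{1-\lambda_2^{(t)}} + \sigma^2\big)$.

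Third, I would unroll this recursion to obtain $\E\|\check{\mathcal{W}}_i^{(t)}\|^2 \le {\lambda_2^{(t)}}^i \E\|\check{\mathcal{W}}_0^{(t)}\|^2 + \tfrac{c}{1-\lambda_2^{(t)}}$, noting that the driven term already equals the target value $\tfrac{c}{1-\lambda_2^{(t)}} = \mu^2\tfrac{{\lambda_2^{(t)}}^2}{1-\lambda_2^{(t)}}K_t\big(\tfrac{G^2}{1-\lambda_2^{(t)}}+\sigma^2\big)$. The geometric transient ${\lambda_2^{(t)}}^i\E\|\check{\mathcal{W}}_0^{(t)}\|^2$ falls below this $O(\mu^2)$ level as soon as ${\lambda_2^{(t)}}^i = O(\mu^2)$, i.e.\ after $i^o = O\!\big(\log(1/\mu)/\log(1/\lambda_2^{(t)})\big) = O(\log\mu) = o(\mu^{-1})$ iterations. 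Summing the two equal contributions produces the factor $2$ in the claimed bound, and since $\|\w_{k,i}^{(t)} - \w_{c,i}^{(t)}\|^2 \le \|\check{\mathcal{W}}_i^{(t)}\|^2$ for every individual agent, the per-agent statement follows at once.

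The step I expect to be the main obstacle is bounding the operator norm $\|\mathcal{B}^{(t)}\|$ by $\lambda_2^{(t)}$: Assumption~\ref{as:connectivity} only controls the \emph{spectral radius} $\rho(A^{(t)} - \tfrac{1}{K_t}\mathds{1}\mathds{1}^{\mathsf{T}})$, which coincides with the $2$-norm only when $A^{(t)}$ is normal (for instance symmetric). For a general primitive doubly-stochastic $A^{(t)}$ the two may differ, so contraction at rate exactly $\lambda_2^{(t)}$ is not immediate. I would resolve this either by restricting to symmetric combination matrices, or by passing to the eigendecomposition (Jordan form) of $A^{(t)}$ and running the energy argument in the induced weighted block-maximum norm, in which contraction at a rate arbitrarily close to $\lambda_2^{(t)}$ holds at the cost of an absorbable constant; the subsequent accounting of the transient horizon $i^o$ is then routine.
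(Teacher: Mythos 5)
Your proposal follows essentially the same route as the paper's own (omitted) proof: stack the team iterates, subtract the centroid via the projection \( I - \tfrac{1}{K_t}\mathds{1}\mathds{1}^{\mathsf{T}} \), kill the cross term using the conditional zero-mean of the gradient noise, apply the weighted Jensen split with weight \( \lambda_2^{(t)} \) together with the bounds \( K_t G^2 \) and \( K_t\sigma^2 \), and unroll the resulting scalar contraction, with the factor of \( 2 \) coming from absorbing the geometric transient into the steady-state term once \( i \ge i^o = O(\log\mu) \). The one subtlety you flag --- that \( \rho(A^{(t)} - \tfrac{1}{K_t}\mathds{1}\mathds{1}^{\mathsf{T}}) \) equals the operator \( 2 \)-norm only for normal \( A^{(t)} \) --- is glossed over in the paper's proof as well (it simply invokes ``sub-multiplicativity of norms''), so your proposed remedies (symmetric combination matrices, or a weighted norm from the Jordan form) are a sound way to close a gap the paper itself leaves implicit.
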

\begin{proof}
  Omitted due to space limitations.
\end{proof}
Lemma~\ref{LEM:WITHIN-TEAM_CONSENSUS} establishes that agents within each team are able to coordinate their parameters \( \w_{k, i}^{(t)} \) to cluster around a common centroid \( \w_{c, i}^{(t)} \) in the mean-square error sense. We now investigate how well agents are able to estimate \( \w_{k, i}^{(t')} \), i.e., the parameters of the competing network. To this end, we exploit the asymmetric flow of information in recursion~\eqref{eq:summarized_2}. In particular, we can decompose for \( k \in \mathcal{N}^{(t)} \):
\begin{align}
  \boldsymbol{w}_{k, i}^{(t')} =&\: \sum_{\ell \in \mathcal{N}} a_{\ell k}^{(t't)} \boldsymbol{w}_{\ell, i-1}^{(t')} \notag \\
  =&\: \sum_{\ell \in \mathcal{N}^{(t)}} a_{\ell k}^{(t't)} \boldsymbol{w}_{\ell, i-1}^{(t')} + \sum_{\ell \in \mathcal{N}^{(t')}} a_{\ell k}^{(t't)} \boldsymbol{w}_{\ell, i-1}^{(t')}
\end{align}
Note that this recursion describes the evolution of \( \boldsymbol{w}_{k, i}^{(t')} \) for \( k \in \mathcal{N}^{(t)} \) only. The estimates \( \boldsymbol{w}_{\ell, i-1}^{(t')} \) for \( \ell \in \mathcal{N}^{(t')} \) evolve independently, and in particular, following Lemma~\ref{LEM:WITHIN-TEAM_CONSENSUS}, track \( \boldsymbol{w}_{c, i-1}^{(t')} \) after sufficient iterations. Such an asymmetric structure is reminiscent to the learning dynamics observed over weakly-connected directed networks, encountered in~\cite{Ying16weakly, Salami17weakly}. Applying these insights to the competitive team setting, we obtain the following lemma.
\begin{lemma}[\textbf{Cross-team learning}]\label{LEM:CROSS-TEAM_LEARNING}
  The estimates for the competing model parameters \( \w_{k, i}^{(t')} \) maintained by agents \( k \in \mathcal{N}^{(t)} \) in team \( (t) \) cluster around the centroid of the competing team \( \w_{c, i}^{(t')} \) after sufficient iterations \( i^o \), i.e.,
  \begin{align}
    \mathds{E} {\|\w_{k, i}^{(t')} - \w_{c, i}^{(t')}\|}^2 \le O(\mu^2)
  \end{align}
  for
  \begin{align}
    i \ge i^o = O(\log(\mu)) = o(\mu^{-1})
  \end{align}
\end{lemma}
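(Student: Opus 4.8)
The plan is to exploit the \emph{asymmetric}, weakly-connected structure of recursion~\eqref{eq:summarized_2}: the true iterates $\boldsymbol{w}_{\ell,i}^{(t')}$ of team $(t')$ evolve on their own (by the diffusion strategy over $A^{(t')}$) and are \emph{not} influenced by the estimates held in team $(t)$, so from the viewpoint of team $(t)$ they act as an external driving signal. First I would stack the estimates held by team $(t)$, namely $\mathcal{W}_i^{(t')} \triangleq \mathrm{col}\{\boldsymbol{w}_{k,i}^{(t')}\}_{k\in\mathcal{N}^{(t)}}$, and the true iterates $\mathcal{V}_i^{(t')}\triangleq\mathrm{col}\{\boldsymbol{w}_{\ell,i}^{(t')}\}_{\ell\in\mathcal{N}^{(t')}}$, and partition the left-stochastic matrix $A^{(t't)}$ by source into a within-team block $P$ (rows indexed by $\mathcal{N}^{(t)}$) and a cross-team block $Q$ (rows indexed by $\mathcal{N}^{(t')}$). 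The decomposition given in the text then reads $\mathcal{W}_i^{(t')} = (P^{\mathsf{T}}\otimes I)\,\mathcal{W}_{i-1}^{(t')} + (Q^{\mathsf{T}}\otimes I)\,\mathcal{V}_{i-1}^{(t')}$. The key structural fact is that left-stochasticity forces the column sums of $P$ and $Q$ to add to one, so $B\triangleq P^{\mathsf{T}}$ is \emph{sub}-stochastic, and the presence in Assumption~\ref{as:connectivity} of at least one positive cross-team link into $\mathcal{N}^{(t)}$ makes at least one row sum of $B$ strictly less than one; combined with connectivity of the estimation subgraph this yields $\rho(B)<1$.

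Next I would track the deviation from the competing centroid. Writing $\widetilde{\mathcal{W}}_i \triangleq \mathcal{W}_i^{(t')} - (\mathds{1}\otimes\boldsymbol{w}_{c,i}^{(t')})$ and using $B\mathds{1} + Q^{\mathsf{T}}\mathds{1} = \mathds{1}$ (a restatement of left-stochasticity), the constant mode cancels and one obtains a driven linear recursion
\begin{align}
  \widetilde{\mathcal{W}}_i = (B\otimes I)\,\widetilde{\mathcal{W}}_{i-1} + (Q^{\mathsf{T}}\otimes I)\,\widetilde{\mathcal{V}}_{i-1} + \mathds{1}\otimes\left(\boldsymbol{w}_{c,i-1}^{(t')} - \boldsymbol{w}_{c,i}^{(t')}\right),
\end{align}
where $\widetilde{\mathcal{V}}_{i-1}\triangleq \mathcal{V}_{i-1}^{(t')} - (\mathds{1}\otimes\boldsymbol{w}_{c,i-1}^{(t')})$. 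Both inhomogeneous terms are $O(\mu)$ in the root-mean-square sense: the first because Lemma~\ref{LEM:WITHIN-TEAM_CONSENSUS}, applied to team $(t')$, gives $\mathds{E}\|\widetilde{\mathcal{V}}_{i-1}\|^2 = O(\mu^2)$; the second because the centroid moves by a single stochastic-gradient step per iteration, so that $\mathds{E}\|\boldsymbol{w}_{c,i}^{(t')}-\boldsymbol{w}_{c,i-1}^{(t')}\|^2 = O(\mu^2)$ by the gradient bound $G$ in Assumption~\ref{as:smoothness} and the noise bound $\sigma^2$ in Assumption~\ref{as:gradientnoise}.

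Finally I would close the argument by iterating the linear recursion. Since $\rho(B)<1$, there is a constant $c$ and a rate $\rho\in(\rho(B),1)$ with $\|B^j\|\le c\,\rho^j$ (equivalently, one works in a weighted norm in which $B$ is a strict contraction). Unrolling $\widetilde{\mathcal{W}}_i$, the homogeneous part contributes a term proportional to $\rho^{2i}\,\mathds{E}\|\widetilde{\mathcal{W}}_0\|^2$, which drops below $O(\mu^2)$ once $i\ge i^o=O(\log\mu)=o(\mu^{-1})$, while the driven part is the output of a stable filter fed by an $O(\mu)$ RMS input and is therefore bounded by $\tfrac{\mathrm{const}}{(1-\rho)^2}\,O(\mu^2)$. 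Combining the two via Young's inequality (splitting the cross terms with a free constant, as is standard) yields $\mathds{E}\|\boldsymbol{w}_{k,i}^{(t')}-\boldsymbol{w}_{c,i}^{(t')}\|^2 \le O(\mu^2)$ for $i\ge i^o$, which is the claim.

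I expect the main obstacle to be the third step: establishing geometric mean-square stability of the \emph{non-symmetric} sub-stochastic filter $B$. Because $\rho(B)<1$ does not by itself control $\|B\|_2$, one must either pass to the weighted norm in which $B$ contracts (the device used for weakly-connected networks in~\cite{Ying16weakly, Salami17weakly}) or run a multi-step contraction argument, and then propagate the resulting norm equivalence through the mean-square bounds while keeping the dependence on the mixing constants explicit. A secondary technical point is the statistical coupling between $\widetilde{\mathcal{V}}_{i-1}$ and the gradient-noise contribution to the centroid drift; conditioning on $\boldsymbol{\mathcal{F}}_{i-1}$ and invoking Assumption~\ref{as:gradientnoise} decouples the zero-mean noise, after which the bounded-gradient and Lipschitz conditions dispatch the remaining deterministic terms.
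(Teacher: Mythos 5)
Your proposal follows essentially the same route as the paper's proof: partition \( A^{(t't)} \) by source team, observe that the block \( B \) acting on the estimates held within \( \mathcal{N}^{(t)} \) is sub-stochastic with \( \rho(B) < 1 \) (via the weakly-connected-network argument of~\cite{Ying16weakly}), treat the contribution from \( \mathcal{N}^{(t')} \) as an \( O(\mu) \)-RMS driving term controlled by Lemma~\ref{LEM:WITHIN-TEAM_CONSENSUS}, and iterate the resulting stable linear recursion. If anything, your version is more careful in two places the paper elides: you retain the centroid-drift term \( \boldsymbol{w}_{c,i-1}^{(t')} - \boldsymbol{w}_{c,i}^{(t')} \) (the paper subtracts \( \boldsymbol{w}_{c,i-1}^{(t')} \) on the right-hand side while writing \( \boldsymbol{w}_{c,i}^{(t')} \) on the left, silently dropping this \( O(\mu) \) drift, which in any case only adds another \( O(\mu^2) \) contribution after filtering), and you correctly flag that \( \rho(B) < 1 \) for a non-symmetric \( B \) does not directly bound \( \|B\|_2 \), so a weighted norm or multi-step contraction is required, whereas the paper's step invoking sub-multiplicativity implicitly assumes \( \|B^{\mathsf{T}}\| \le \rho(B) \).
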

\begin{proof}
  Omitted due to space limitations.
\end{proof}
Lemmas~\ref{LEM:WITHIN-TEAM_CONSENSUS} and~\ref{LEM:CROSS-TEAM_LEARNING}, when taken together ensure that team \( (t) \) is able to coordinate on a common model \( \w_{c, i}^{(t)} \), and estimate the competing model \( \w_{c, i}^{(t')} \) with high accuracy for sufficiently small step-sizes (namely within \( O(\mu^2) \)) in the mean-square sense. We combine these to obtain a description of the learning dynamics of the competitive diffusion strategy.
\begin{theorem}[\textbf{Learning dynamics of competitive diffusion}]
  Iterates generated by the competitive diffusion scheme in Algorithm~\ref{alg:proposed_algorithm} approximately follow the centralized batch strategy. Specifically, for \( k \in \mathcal{N}^{(t)} \) and \( i \in \{ 1, 2 \} \), we have for \( i \ge i^o \):
  \begin{align}
    \w_{k, i}^{(t)} = \w_{k, i-1}^{(t)} - \frac{\mu}{K_t} \sum_{k \in \mathcal{N}^{(t)}} \widehat{\nabla J}_k^{(t)}\left(\w_{k, i-1}^{(1)}; \w_{k, i-1}^{(2)}\right) - \mu \boldsymbol{d}_{k, i}
  \end{align}
  where \( i^o = o(\mu^{-1}) \) and:
  \begin{align}
    \E\|\boldsymbol{d}_{k, i}\|^2 \le O(\mu^2)
  \end{align}
\end{theorem}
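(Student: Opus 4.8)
The plan is to build directly on the exact centroid identity already derived above and on the two clustering lemmas. Averaging the adapt-and-combine step over team $(t)$ gives the exact recursion
\begin{align}
  \w_{c,i}^{(t)} = \w_{c,i-1}^{(t)} - \frac{\mu}{K_t}\sum_{\ell\in\mathcal{N}^{(t)}}\widehat{\nabla J}_\ell^{(t)}\left(\w_{\ell,i-1}^{(t)};\w_{\ell,i-1}^{(t')}\right),
\end{align}
so the centroid already evolves as a batch gradient step; the only discrepancy from the claimed form is that each summand is evaluated at agent $\ell$'s own iterate $\w_{\ell,i-1}^{(t)}$ and at its private estimate $\w_{\ell,i-1}^{(t')}$ of the competitor, rather than at the two genuine team centroids $\w_{c,i-1}^{(1)},\w_{c,i-1}^{(2)}$. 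My plan is therefore to show that swapping these heterogeneous arguments for the common centroids costs only an $O(\mu)$ term in the root-mean-square sense, which I collect into $\mu\boldsymbol{d}_{k,i}$.

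Concretely, I would define the perturbation as the gradient gap
\begin{align}
  \boldsymbol{d}_{k,i} \triangleq \frac{1}{K_t}\sum_{\ell\in\mathcal{N}^{(t)}}\left[\widehat{\nabla J}_\ell^{(t)}\left(\w_{\ell,i-1}^{(t)};\w_{\ell,i-1}^{(t')}\right) - \widehat{\nabla J}_\ell^{(t)}\left(\w_{c,i-1}^{(t)};\w_{c,i-1}^{(t')}\right)\right],
\end{align}
and apply the two-argument Lipschitz property of Assumption~\ref{as:smoothness}. Inserting the intermediate point $\widehat{\nabla J}_\ell^{(t)}(\w_{c,i-1}^{(t)};\w_{\ell,i-1}^{(t')})$ and using the triangle inequality gives the pointwise bound
\begin{align}
  \left\|\boldsymbol{d}_{k,i}\right\| \le \frac{\delta}{K_t}\sum_{\ell\in\mathcal{N}^{(t)}}\left(\left\|\w_{\ell,i-1}^{(t)} - \w_{c,i-1}^{(t)}\right\| + \left\|\w_{\ell,i-1}^{(t')} - \w_{c,i-1}^{(t')}\right\|\right).
\end{align}
Taking expectations, pushing the square through the average by Jensen's inequality, and bounding the two families of terms separately then finishes the estimate: the first family is controlled by the within-team bound of Lemma~\ref{LEM:WITHIN-TEAM_CONSENSUS} and the second by the cross-team bound of Lemma~\ref{LEM:CROSS-TEAM_LEARNING}, both of which are $O(\mu^2)$ once $i\ge i^o$. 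Hence $\E\|\boldsymbol{d}_{k,i}\|^2 \le O(\mu^2)$, with the transient index inherited as $i^o=o(\mu^{-1})$ from the lemmas, and substituting the definition of $\boldsymbol{d}_{k,i}$ back into the centroid recursion reproduces the stated batch form.

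The main obstacle is the second family of terms, the cross-team estimation error $\w_{\ell,i-1}^{(t')}-\w_{c,i-1}^{(t')}$. Unlike the within-team deviation, which is driven by the symmetric doubly-stochastic mixing of Assumption~\ref{as:connectivity}, this quantity is produced by the asymmetric, merely left-stochastic flow in~\eqref{eq:summarized_2} and is coupled to the competitor centroid, which team $(t)$ neither observes nor controls. Invoking Lemma~\ref{LEM:CROSS-TEAM_LEARNING} requires that this estimate has already reached its $O(\mu)$ steady state, which is why the transient $i^o$ must be shown to be $o(\mu^{-1})$ and common to both lemmas; extra care is needed because the Lipschitz perturbation in the competitor argument feeds back into the gradient that drives the team's own iterate, so the two error recursions must be closed jointly rather than in isolation. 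A secondary point worth flagging is that the clean $O(\mu^2)$ bound is most naturally a statement about the centroid $\w_{c,i}^{(t)}$; transferring it to an individual iterate $\w_{k,i}^{(t)}$ reintroduces the within-team deviation $\w_{k,i}^{(t)}-\w_{c,i}^{(t)}$, itself $O(\mu)$ by Lemma~\ref{LEM:WITHIN-TEAM_CONSENSUS}, so the per-agent statement should be read up to this disagreement.
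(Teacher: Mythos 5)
Your proposal matches the paper's approach exactly: the paper's own proof is a single line stating that the result follows directly from Lemmas~\ref{LEM:WITHIN-TEAM_CONSENSUS} and~\ref{LEM:CROSS-TEAM_LEARNING} together with the Lipschitz conditions~\eqref{eq:lipschitz_1}--\eqref{eq:lipschitz_2}, and your argument --- defining \( \boldsymbol{d}_{k,i} \) as the gradient gap, splitting it through an intermediate point, and bounding the two families of deviations by the within-team and cross-team lemmas --- is precisely the intended elaboration of that sentence. Your closing caveat, that the clean \( O(\mu^2) \) bound is really a statement about the centroid and that passing to an individual iterate reintroduces the within-team disagreement, is a legitimate subtlety the paper glosses over.
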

\begin{proof}
  The result follows directly from Lemma~\ref{LEM:WITHIN-TEAM_CONSENSUS} and~\ref{LEM:CROSS-TEAM_LEARNING} along with the Lipschitz conditions~\eqref{eq:lipschitz_1}--\eqref{eq:lipschitz_2}.
\end{proof}

\section{Numerical Results}
We illustrate how the competitive setting~\eqref{eq:team_one_consensus}--\eqref{eq:team_two_consensus} can be applied to train generative adversarial neural networks (GANs) in a decentralized manner. The objective of GANs is to learn a ``generator'' mapping \( g(w^{(1)}; \boldsymbol{z}) \) from some random noise variable \( \boldsymbol{z} \in \mathds{R}^{M_n} \) to a feature space \( \mathds{R}^{M_f} \), such that objects \( g(w^{(1)}; \boldsymbol{z}) \) generated from pure noise are indistinguishable (in some sense) from features \( \boldsymbol{h} \in \mathds{R}^{M_f} \) following an unkown distribution~\cite{Goodfellow14}. This is accomplished by simultaneously training a ``discriminator'' \( d(w^{(2)}; \widehat{\boldsymbol{h}}) \) to determine whether \( \widehat{\boldsymbol{h}} \) was sampled from \( \boldsymbol{h} \), or from \( g(w^{(1)}; \boldsymbol{z}) \). Specifically, we can let:
\begin{align}
  J_k^{(1)}(w^{(1)}, w^{(2)}) \triangleq&\: \mathds{E}_{\boldsymbol{z}} \log(1 - d(g(\boldsymbol{z})) + \mathds{E}_{\boldsymbol{h}_k} \log(d(\boldsymbol{h}_k))\\
  J_k^{(1)}(w^{(1)}, w^{(2)}) \triangleq&\: - J^{(1)}(w^{(1)}, w^{(2)})
\end{align}
which fits into the framework considered in this work. We illustrate performance on simple, fully-connected, feedforward neural networks trained using Alg.~\ref{alg:proposed_algorithm} and depict loss evolution in Fig.~\ref{fig:losses} and the evolution of generated images in~\ref{fig:digits}.
\begin{figure}[htb]
	\centering
	\includegraphics[width=.95\columnwidth]{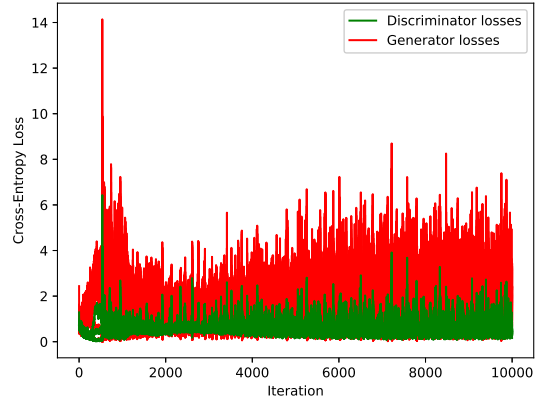}
\caption{Generator and discriminator losses for a collection of \( K = 16 \) competing agents.}\label{fig:losses}
\end{figure}
\begin{figure}[htb]
	\centering
	\includegraphics[width=.95\columnwidth]{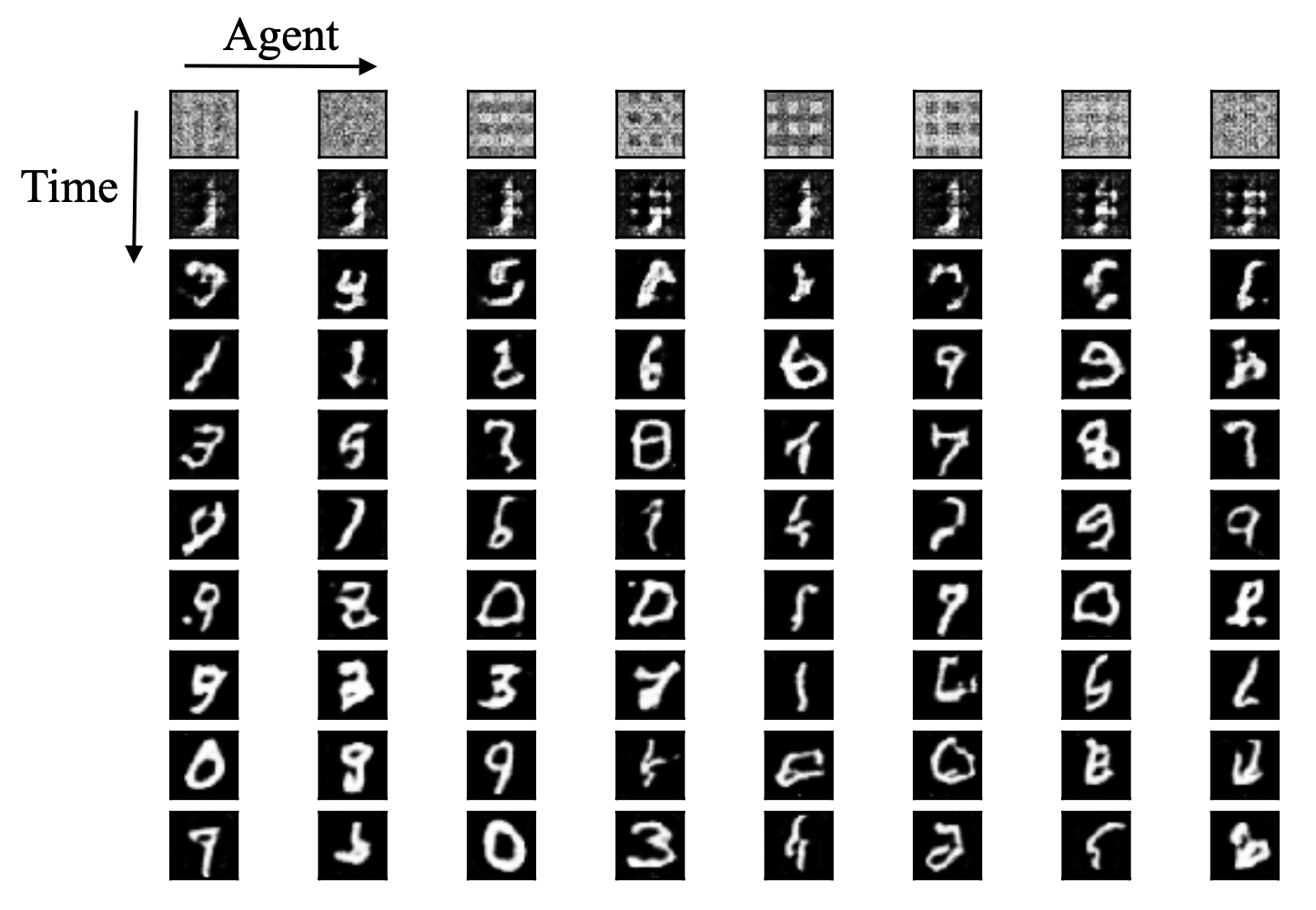}
\caption{Digits generated by the different competing generators (from left to right) as time learning progresses (top to bottom).}\label{fig:digits}
\end{figure}

\bibliographystyle{IEEEbib}
\bibliography{main}

\begin{thebibliography}{10}

\bibitem{Flam02}
S.~D. Flåm,
\newblock ``Equilibrium, evolutionary stability and gradient dynamics,''
\newblock {\em International Game Theory Review (IGTR)}, vol. 4, no. 4, pp.
  357--370, 2002.

\bibitem{Nedic09}
A.~Nedi\'c and A.~Ozdaglar,
\newblock ``{Distributed subgradient methods for multi-agent optimization},''
\newblock {\em IEEE Trans. Automatic Control}, vol. 54, no. 1, pp. 48--61, Jan
  2009.

\bibitem{Chen15transient}
J.~Chen and A.~H. Sayed,
\newblock ``On the learning behavior of adaptive networks - {Part I}: Transient
  analysis,''
\newblock {\em IEEE Transactions on Information Theory}, vol. 61, no. 6, pp.
  3487--3517, June 2015.

\bibitem{Sayed14}
A.~H. Sayed,
\newblock ``{Adaptation, learning, and optimization over networks},''
\newblock {\em Foundations and Trends in Machine Learning}, vol. 7, no. 4-5,
  pp. 311--801, July 2014.

\bibitem{Shi15}
W.~Shi, Q.~Ling, G.~Wu, and W.~Yin,
\newblock ``{EXTRA: A}n exact first-order algorithm for decentralized consensus
  optimization,''
\newblock {\em SIAM Journal on Optimization}, vol. 25, no. 2, pp. 944--966,
  2015.

\bibitem{Paolo16}
P.~{Di Lorenzo} and G.~{Scutari},
\newblock ``Next: In-network nonconvex optimization,''
\newblock {\em IEEE Transactions on Signal and Information Processing over
  Networks}, vol. 2, no. 2, pp. 120--136, 2016.

\bibitem{Yuan18}
K.~Yuan, B.~Ying, X.~Zhao, and A.~H. Sayed,
\newblock ``Exact diffusion for distributed optimization and learning -- {Part
  II}: Convergence analysis,''
\newblock {\em IEEE Transactions on Signal Processing}, vol. 67, no. 3, pp.
  724--739, Feb 2019.

\bibitem{Xin19}
R.~{Xin}, A.~K. {Sahu}, U.~A. {Khan}, and S.~{Kar},
\newblock ``Distributed stochastic optimization with gradient tracking over
  strongly-connected networks,''
\newblock in {\em Proc. IEEE 58th Conference on Decision and Control (CDC)},
  2019, pp. 8353--8358.

\bibitem{Jakovetic20}
D.~{Jakovetić}, D.~{Bajović}, J.~{Xavier}, and J.~M.~F. {Moura},
\newblock ``Primal-dual methods for large-scale and distributed convex
  optimization and data analytics,''
\newblock {\em Proceedings of the IEEE}, pp. 1--16, 2020.

\bibitem{Smith17}
V.~Smith, C.-K. Chiang, M.~Sanjabi, and A.~S. Talwalkar,
\newblock ``Federated multi-task learning,''
\newblock in {\em Advances in Neural Information Processing Systems}, 2017,
  vol.~30.

\bibitem{Nassif20mag}
R.~{Nassif}, S.~{Vlaski}, C.~{Richard}, J.~{Chen}, and A.~H. {Sayed},
\newblock ``Multitask learning over graphs: {An} approach for distributed,
  streaming machine learning,''
\newblock {\em IEEE Signal Processing Magazine}, vol. 37, no. 3, pp. 14--25,
  2020.

\bibitem{Facchinei07}
F.~Facchinei and C.~Kanzow,
\newblock ``Generalized nash equilibrium problems,''
\newblock {\em 4OR -- A Quarterly Journal of Operations Research}, vol. 5, no.
  3, pp. 173--210, 2007.

\bibitem{ChungKai17}
C.~{Yu}, M.~{van der Schaar}, and A.~H. {Sayed},
\newblock ``Distributed learning for stochastic generalized nash equilibrium
  problems,''
\newblock {\em IEEE Transactions on Signal Processing}, vol. 65, no. 15, pp.
  3893--3908, 2017.

\bibitem{Koshal16}
J.~Koshal, A.~Nedić, and U.~V. Shanbhag,
\newblock ``{Distributed Algorithms for Aggregative Games on Graphs},''
\newblock {\em Operations Research}, vol. 64, no. 3, pp. 680--704, June 2016.

\bibitem{Salehisadaghiani16}
F.~Salehisadaghiani and L.~Pavel,
\newblock ``Distributed nash equilibrium seeking,''
\newblock {\em Automatica}, vol. 72, no. C, pp. 209–216, Oct. 2016.

\bibitem{Salehisadaghiani19}
F.~Salehisadaghiani, W.~Shi, and L.~Pavel,
\newblock ``Distributed nash equilibrium seeking under partial-decision
  information via the alternating direction method of multipliers,''
\newblock {\em Automatica}, vol. 103, pp. 27--35, 2019.

\bibitem{Tatarenko19}
T.~Tatarenko and A.~Nedi\'c,
\newblock ``Geometric convergence of distributed gradient play in games with
  unconstrained action sets,''
\newblock {\em available as arXiv:1907.07144}, July 2019.

\bibitem{Meng20}
M.~Meng and X.~Li,
\newblock ``On the linear convergence of distributed nash equilibrium seeking
  for multi-cluster games under partial-decision information,''
\newblock {\em available as arXiv:2005.06923}, May 2020.

\bibitem{Zimmermann21}
J.~Zimmermann, T.~Tatarenko, V.~Willert, and J.~Adamy,
\newblock ``Gradient-tracking over directed graphs for solving leaderless
  multi-cluster games,''
\newblock {\em available as arXiv:2102.09406}, Feb 2021.

\bibitem{Vlaski19nonconvexP1}
S.~{Vlaski} and A.~H. {Sayed},
\newblock ``Distributed learning in non-convex environments -- {Part I:
  A}greement at a linear rate,''
\newblock {\em IEEE Transactions on Signal Processing}, vol. 69, pp.
  1242--1256, 2021.

\bibitem{Ying16weakly}
B.~{Ying} and A.~H. {Sayed},
\newblock ``Information exchange and learning dynamics over weakly connected
  adaptive networks,''
\newblock {\em IEEE Transactions on Information Theory}, vol. 62, no. 3, pp.
  1396--1414, 2016.

\bibitem{Salami17weakly}
H.~{Salami}, B.~{Ying}, and A.~H. {Sayed},
\newblock ``Social learning over weakly connected graphs,''
\newblock {\em IEEE Transactions on Signal and Information Processing over
  Networks}, vol. 3, no. 2, pp. 222--238, 2017.

\bibitem{Goodfellow14}
I.~Goodfellow, J.~Pouget-Abadie, M.~Mirza, B.~Xu, D.~Warde-Farley, S.~Ozair,
  A.~Courville, and Y.~Bengio,
\newblock ``Generative adversarial nets,''
\newblock in {\em Advances in Neural Information Processing Systems}, 2014,
  vol.~27.

\end{thebibliography}

\end{document}